\newtheorem{lemma}{Lemma}
\newtheorem{prop}{Proposition}
\DeclareMathOperator{\Tr}{Tr}
\DeclareMathOperator{\Rk}{rank}
\newcommand{\papr}{{\rm PAPR}}
\begin{document}

\title{Tone-index Multisine Modulation for SWIPT}

\author{Ioannis Krikidis,~\IEEEmembership{Fellow,~IEEE}, and Constantinos Psomas,~\IEEEmembership{Senior Member,~IEEE}\vspace{-0.7cm}
\thanks{I. Krikidis and C. Psomas are with the Department of Electrical and Computer Engineering, University of Cyprus, Nicosia 1678 (E-mail: {\sf\{krikidis, psomas\}@ucy.ac.cy}). This work has received funding from the European Research Council (ERC) under the European Union's Horizon 2020 research and innovation programme (Grant agreement No. 819819).}}

\maketitle

\begin{abstract}
We propose a new simultaneous wireless information and power transfer (SWIPT) technique that embeds information bits in the tone-index of multisine waveforms. By varying the number of subcarriers of the transmitted bandwidth-constrained multisine signal, the proposed scheme enables efficient radio-frequency energy harvesting and low-complexity information transmission. The receiver does not require channel estimation and employs a non-coherent maximum-likelihood detection at the envelope of the received signal. The performance of the proposed tone-index modulation is evaluated in terms of average error probability for a flat-fading channel, and we show that it outperforms its peak-to-average-power-ratio counterpart.
\end{abstract}
\vspace{-0.1cm}
\begin{keywords}
SWIPT, wireless power transfer, multisine signal, PAPR, pairwise error probability.
\end{keywords}

\vspace{-0.2cm}
\section{Introduction}

\IEEEPARstart{S}{imultaneous} wireless information and power transfer (SWIPT) is a new communication paradigm, where information and energy flows are co-designed/co-engineered to concurrently communicate and energize \cite{KRI,CLE}. It is an enabling technology to provide ubiquitous connectivity and energy sustainability in low-power wireless devices. The implementation of SWIPT requires the design of new waveforms and low-complexity receiver architectures that increase the harvesting efficiency \cite{CLA1,CLA2}. It has been  shown that multisine signals boost the energy harvesting by exploiting the non-linearity of the rectification process \cite{CLER}.

 Orthogonal SWIPT architectures (i.e., time switching, power splitting) split the received signal in two distinct parts, one for information decoding and one for energy harvesting \cite{KRI, GAO}. Despite their efficiency to achieve various information-energy tradeoffs, these approaches require radio-frequency (RF) to baseband downconversion and correspond to high power consumption, which is not inline with the energy harvesting limitations that characterizes SWIPT. The first technique that overcomes this drawback is the integrated receiver, where information is embedded into the levels of energy signals and decoding is achieved by employing coherent detection at the output of the rectification circuit \cite{RUI1}. A more recent solution does not require channel estimation and utilizes multi-sine waveforms of variable bandwidth for energy transfer, while their distinct peak-to-average-power-ratios (PAPRs) convey information \cite{KIM}. Although this new technique is suitable for low-power devices, PAPR-based information decoding works only for high signal-to-noise ratios (SNRs) and therefore cannot be used for low and/or moderate SNRs. 

In this letter, we investigate a new SWIPT technique that exploits bandwidth-constrained multisine waveforms for RF energy harvesting, while information is embedded in the number of tones. By assuming only a statistical channel state information, we employ a (low-complexity) non-coherent maximum-likelihood (ML) detection on the received signal. We characterize the performance of the proposed scheme in terms of the average error probability for a flat-fading channel, based on the union bound over the exact expressions of the pairwise error probabilities. The proposed tone-index multisine (TIM) modulation outperforms its PAPR counterpart \cite{KIM} and ensures connectivity for a larger SNR regime. The TIM modulation and the associated decoding scheme are promising designs for practical SWIPT implementations.

\vspace{-0.2cm}
\section{System Model}
We consider a simple point-to-point topology consisting of one transmitter, aiming to simultaneously convey information and energy to a single receiver. The transmitter generates an unmodulated $N$-tone multisine signal of bandwidth $W=(N-1)\Delta f^{(N)}$ with zero phase arrangement and intercarrier frequency spacing $\Delta f^{(N)}$ \cite{PAN}. In comparison to continuous wave signals with the same average power, multisine waveforms provide higher PAPR and can more easily overcome the build-in potential of the rectifying devices \cite{CLE,PAN}. The baseband equivalent is given by 
\begin{align}
x(t)=\sqrt{\frac{2P}{N}} \Re \left\{ \frac{\sin(\pi N \Delta f^{(N)} t)}{\sin(\pi \Delta f^{(N)} t)}e^{j2\pi f_c t} \right\}, 0 \leq t \leq T,
\end{align}
where $\Re\{z\}$ denotes the real part of $z$, $f_c$ is the carrier frequency, $P$ is the total transmit power and $T$ is the symbol time. It is worth noting that the transmitted signal corresponds to a conventional amplitude modulated (AM) signal. The transmitted AM signal propagates through a frequency-flat block fading\footnote{An appropriate system design ensures operation over flat-fading channels; in this case, uniform power allocation across the tones becomes optimal \cite{CLE2}.} wireless channel $h(t)$, and thus the received signal is written as
\begin{align}
r(t)=\sqrt{\frac{2P}{N}} \Re \left\{ \frac{\sin(\pi N \Delta f^{(N)} t)}{\sin(\pi \Delta f^{(N)} t)}h(t)e^{j2\pi f_c t} \right\}+n(t),
\end{align}
where $n(t)$ denotes the additive white Gaussian noise (AWGN) component.
The received signal is driven to a rectenna circuit for RF-to-DC conversion and power management. For the information transfer, the receiver employs an AM linear envelope detector on the received signal (similar to \cite{KIM}), which follows a tone-index demodulator block; diode-based distortion is not considered for the sake of simplicity. By assuming baseband representation, the discrete-time multisine signal envelope is
\begin{align}
r[k]=h x^{(N)}[k]+n[k],\;\;\text{with}\; 1\leq k\leq K, \label{eq1}
\end{align}
where $x^{(N)}[k]=\sqrt{\frac{P}{N}} \sin[\pi N\Delta f^{(N)} k]/\sin[\pi \Delta f^{(N)} k]$ denotes the $k$-th sample of the transmitted waveform, $h$ is the (real) channel coefficient which is considered constant for a symbol time (so we omit the time dependency), $n[k]$ is the $k$-th sample of the AWGN, and $K$ is the total number of samples per symbol. We assume $h\sim \mathcal{N}(0,\sigma^2_h)$ and $n\sim \mathcal{N}(0,\sigma^2_n)$, where $\mathcal{N}({\mu,\sigma^2})$ indicates the Gaussian distribution with mean $\mu$ and variance $\sigma^2$.

\noindent 
{\it RF energy harvesting:} By considering the non-linear physics-based diode model, the DC energy harvested associated with an $N$-tone multisine signal is given by \cite{CLE2}
\begin{align}
Q_N=\!\!\sum_{i=2,4}a_i\mathbb{E}\{|r(t)|^i\} =a_2\sigma_h^2 P+ 3 a_4 \sigma_h^4 \frac{2N^2+1}{2N} P^2, \label{EH}
\end{align}
where $a_2$, $a_4$ are constants related to the characteristics of the rectification circuit. 
 
 
\begin{figure}\centering
  \includegraphics[width=0.93\linewidth]{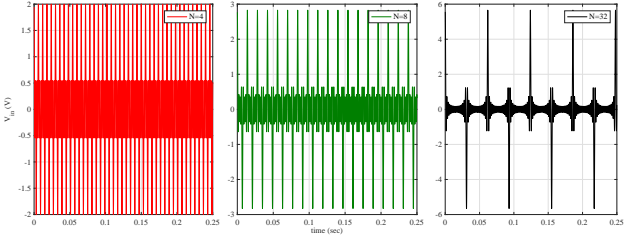}\vspace{-1mm}
  \caption{Time-domain waveform of a $N$-tone signal with  $N\in \mathcal{S}=\{4,8,32\}$.}\label{fig1}\vspace{-1.5mm}
\end{figure}

\vspace{-0.2cm}
\section{Tone-index multisine modulation}

The proposed TIM scheme embeds information into the number of tones of the transmitted multisine signal. At each symbol time, the transmitter conveys information to the receiver by varying the number of subcarriers $N$, based on a predefined set $\mathcal{S}$ of multisine waveforms. As an example, Fig. \ref{fig1} presents a set of three multisine signals with $N\in \mathcal{S}=\{4,8,32\}$; by selecting one waveform at each symbol time, the transmitter conveys $\log_2(3)$ bits of information. To extract the information, the receiver employs a low-complexity envelope detector which converts the received high-frequency AM signal to the envelope of the original multisine signal. The output of the envelope detector feeds a tone-index demodulation which estimates the number of subcarriers by employing a non-coherent ML detection. In this way, the transmitter continuously transmits energy-efficient multisine waveforms of bandwidth $W$, and information transfer is performed without power-hungry RF to baseband downconversion. Although the proposed architecture requires an oversampling of $K$ samples/symbol, simulation results demonstrate that small values of $K$ are sufficient for successful decoding. Given that the system bandwidth is low, the cost associated to the analog-to-digital conversion process is negligible \cite{SCO}. Fig. \ref{fig2} schematically presents the proposed SWIPT architecture.

The proposed TIM scheme transmits a fixed set of multisine waveforms and maintains a constant bandwidth $W$ by adapting $\Delta f^{(N)}$ accordingly; on the other hand, the PAPR-based scheme in \cite{KIM} corresponds to a variable bandwidth. In addition, a one-to-one relationship between the transmitted waveforms and the PAPRs does not exist since the transmitted tones in \cite{KIM} vary based on the available channel state information. 

\begin{figure}
  \includegraphics[width=0.9\linewidth]{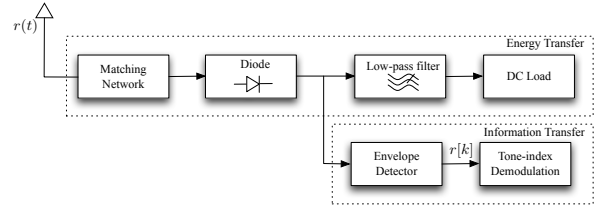}\vspace{-1mm}
  \caption{A new SWIPT architecture based on TIM modulation.}\label{fig2}\vspace{-1.5mm}
\end{figure}

\vspace{-0.2cm}
\subsection{ML detection and average error probability}

The index-tone demodulator employs an ML non-coherent detector based on the linear expression in \eqref{eq1}. Specifically, conditioned on the transmitted waveform $\mathbf{x}^{(i)}=[{x}^{(i)}[1], {x}^{(i)}[2],\cdots, {x}^{(i)}[K] ]^T$, the received vector $\mathbf{r}$ is Gaussian with probability density function (PDF)
\begin{align}
p(\mathbf{r}| \mathbf{x}^{(i)})=\frac{1}{\sqrt{|2\pi \mathbf{R}_i|}}e^{-\frac{1}{2}\mathbf{r}^T\mathbf{R}_i^{-1}\mathbf{r}},
\end{align}
where $\mathbf{R}_i=\sigma_n^2\mathbf{I}+\sigma_h^2\mathbf{x}^{(i)}(\mathbf{x}^{(i)})^T$ denotes the covariance matrix; the ML detector (log-likelihood) can be summarized as
\begin{align}
\hat{N}=\arg\max_{i\in \mathcal{S}} p(\mathbf{r}| \mathbf{x}^{(i)})=\arg\max_{i\in \mathcal{S}} \log p(\mathbf{r}| \mathbf{x}^{(i)}).
\end{align}
To assess the performance of the ML detector, we use the pairwise error probability \cite{SKO}
\begin{align}
&\mathbb{P}\{\mathbf{x}^{(i)}\rightarrow \mathbf{x}^{(j)} \}=\mathbb{P}\left\{\log p(\mathbf{r}| \mathbf{x}^{(j)})>\log p(\mathbf{r}| \mathbf{x}^{(i)})\bigg|\mathbf{x}^{(i)} \right\} \nonumber \\
&=\mathbb{P}\left\{\mathbf{r}^T (\mathbf{R}_i^{-1}-\mathbf{R}_j^{-1})\mathbf{r} > \phi_{ij} \right\} \nonumber \\
&= \mathbb{P}\left\{\mathbb{\zeta}^T \mathbf{R}_i^{\frac{T}{2}} \left(\mathbf{R}_i^{-1}-\mathbf{R}_j^{-1}\right)\mathbf{R}_i^{\frac{1}{2}} \mathbb{\zeta}> \phi_{ij} \right\},\label{pep}
\end{align}
where $\phi_{ij}= \log \frac{|\mathbf{R}_j|}{|\mathbf{R}_i|}$ and the vector $\mathbf{\zeta} = \mathbf{R}_i^{-\frac{1}{2}} \mathbf{r}$ has elements $\zeta_i$, which are independent and identically distributed (i.i.d.) zero-mean, unit-variance real Gaussian random variables. To further analyze the pairwise probability, we state the following Lemma.

\begin{lemma}\label{lemma1}
The rank of the matrix $\mathbf{R}_i^{\frac{T}{2}} \left(\mathbf{R}_i^{-1}-\mathbf{R}_j^{-1}\right)\mathbf{R}_i^{\frac{1}{2}}$ is two with eigenvalues $\mu_1=\sigma_n^2 \lambda_1$ and $\mu_2=(\sigma_n^2+\sigma_h^2 (\mathbf{x}^{(i)})^T(\mathbf{x}^{(i)}))\lambda_2$, where $\lambda_1$, $\lambda_2$ are the two non-zero eigenvalues of the matrix $\mathbf{R}_i^{-1}-\mathbf{R}_j^{-1}$ with opposite signs.  
\end{lemma}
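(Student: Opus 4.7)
The plan is to exploit the rank-one-update structure of both $\mathbf{R}_i$ and $\mathbf{R}_j$ together with the fact that congruence by the invertible factor $\mathbf{R}_i^{1/2}$ preserves rank.

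First I would invoke the Sherman--Morrison identity on each of $\mathbf{R}_i^{-1}$ and $\mathbf{R}_j^{-1}$, writing each as $(1/\sigma_n^{2})\mathbf{I}$ minus a positive semidefinite rank-one correction aligned with $\mathbf{x}^{(i)}$ (resp.\ $\mathbf{x}^{(j)}$). Subtracting, the scaled-identity pieces cancel and one is left with
$$\mathbf{R}_i^{-1}-\mathbf{R}_j^{-1}=\frac{\sigma_h^{2}}{\sigma_n^{2}a_j}\mathbf{x}^{(j)}(\mathbf{x}^{(j)})^{T}-\frac{\sigma_h^{2}}{\sigma_n^{2}a_i}\mathbf{x}^{(i)}(\mathbf{x}^{(i)})^{T},$$
with $a_\ell=\sigma_n^{2}+\sigma_h^{2}\|\mathbf{x}^{(\ell)}\|^{2}$. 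This is the difference of two rank-one matrices, hence has rank at most two; since distinct elements of the constellation $\mathcal{S}$ produce linearly independent waveforms, the rank is exactly two, and since $\mathbf{R}_i^{1/2}$ is invertible the congruence that appears in the statement also has rank two.

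The opposite-sign property of the two non-zero eigenvalues $\lambda_1,\lambda_2$ then follows from Weyl's inequalities applied to the (positive semidefinite) $+$ (negative semidefinite) splitting displayed above, which forces at least one non-negative and at least one non-positive non-zero eigenvalue; strictness of both signs is obtained by a short $2\times 2$ calculation showing that the determinant of the restriction to $V=\mathrm{span}(\mathbf{x}^{(i)},\mathbf{x}^{(j)})$ is proportional to $-\|\mathbf{x}^{(i)}\|^{2}\|\mathbf{x}^{(j)}\|^{2}\sin^{2}\theta<0$, where $\theta$ is the angle between the two waveforms. For the explicit form of $\mu_1,\mu_2$ I would then use that $\mathbf{x}^{(i)}$ is an eigenvector of $\mathbf{R}_i$ with eigenvalue $a_i$, while every vector orthogonal to $\mathbf{x}^{(i)}$ is a $\sigma_n^{2}$-eigenvector. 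In an orthonormal basis of $V$ whose first axis is aligned with $\mathbf{x}^{(i)}$, the restriction $\mathbf{R}_i^{1/2}|_V$ is $\mathrm{diag}(\sqrt{a_i},\sigma_n)$, so $M|_V$ becomes the diagonally scaled congruence $\mathbf{D}(\mathbf{R}_i^{-1}-\mathbf{R}_j^{-1})|_V\mathbf{D}$ with $\mathbf{D}=\mathrm{diag}(\sqrt{a_i},\sigma_n)$. Taking the determinant of this $2\times 2$ block immediately yields $\mu_1\mu_2=\sigma_n^{2}a_i\lambda_1\lambda_2$, which is consistent with the parameterisation in the lemma.

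The main obstacle I anticipate is pinning down the individual matching of $\mu_1$ to $\sigma_n^{2}\lambda_1$ and of $\mu_2$ to $a_i\lambda_2$: the scaling matrix $\mathbf{D}$ is not aligned with the eigenbasis of $\mathbf{R}_i^{-1}-\mathbf{R}_j^{-1}$ unless $\mathbf{x}^{(j)}$ is parallel or orthogonal to $\mathbf{x}^{(i)}$, so the identification requires a careful labelling of $\lambda_1,\lambda_2$ (rather than a term-by-term similarity). I would fix this either by explicitly diagonalising the $2\times 2$ block and reading off the eigenvalues, or by casting the problem as a generalised eigenvalue problem for the pair $(\mathbf{R}_i^{-1}-\mathbf{R}_j^{-1},\mathbf{R}_i^{-1})$, whose spectrum carries the $\sqrt{a_i}$ vs.\ $\sigma_n$ dichotomy in the right place.
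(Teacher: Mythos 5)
Your rank argument (invertibility of $\mathbf{R}_i^{1/2}$ preserves rank, which is more elementary than the paper's Sylvester/rank-sum route) and your sign argument are sound; in fact your $2\times 2$ determinant computation on $V=\mathrm{span}(\mathbf{x}^{(i)},\mathbf{x}^{(j)})$ establishes the opposite-sign claim exactly, whereas the paper obtains it from the approximation $\Tr(\mathbf{W}_{ij})\approx 0$ (justified by $(\mathbf{x}^{(i)})^T\mathbf{x}^{(i)}\gg 1$), which additionally yields $\lambda_2\approx -\lambda_1$. The genuine gap is the one you flag yourself: the explicit identifications $\mu_1=\sigma_n^2\lambda_1$ and $\mu_2=(\sigma_n^2+\sigma_h^2(\mathbf{x}^{(i)})^T\mathbf{x}^{(i)})\lambda_2$ are the actual content of the lemma, and your argument only pins down their product through the determinant of the $2\times 2$ block; ``explicitly diagonalise the block'' or ``recast as a generalised eigenproblem'' is a plan, not a proof, so the proposal is incomplete precisely where the lemma is non-trivial.

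For comparison, the paper closes this step by noting that $\mathbf{R}_i^{\frac{T}{2}}\mathbf{W}_{ij}\mathbf{R}_i^{\frac{1}{2}}$ and $\mathbf{W}_{ij}\mathbf{R}_i$ share their nonzero eigenvalues and then asserting that the symmetric matrices $\mathbf{W}_{ij}$ and $\mathbf{R}_i$ admit a common eigenbasis $\mathbf{P}$, so that the eigenvalues of the product are products of individual eigenvalues, with $\mathbf{R}_i$ contributing $\sigma_n^2$ (multiplicity $K-1$) and $\sigma_n^2+\sigma_h^2(\mathbf{x}^{(i)})^T\mathbf{x}^{(i)}$. The misalignment you worry about is exactly the fragile point of that step: simultaneous diagonalizability requires $\mathbf{W}_{ij}\mathbf{R}_i=\mathbf{R}_i\mathbf{W}_{ij}$, which holds only when $(\mathbf{x}^{(i)})^T\mathbf{x}^{(j)}=0$ (or the waveforms are parallel). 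Under orthogonality the identification is immediate and carries the labelling the lemma intends: $\mathbf{x}^{(j)}$ is then a $\sigma_n^2$-eigenvector of $\mathbf{R}_i$ and carries the positive eigenvalue $\lambda_1$ of $\mathbf{W}_{ij}$, while $\mathbf{x}^{(i)}$ is an $(\sigma_n^2+\sigma_h^2(\mathbf{x}^{(i)})^T\mathbf{x}^{(i)})$-eigenvector carrying $\lambda_2$. That (possibly with an ``approximately orthogonal'' qualifier on the multisine waveforms) is the route you would need to finish your argument; without it, only the product relation $\mu_1\mu_2=\sigma_n^2\left(\sigma_n^2+\sigma_h^2(\mathbf{x}^{(i)})^T\mathbf{x}^{(i)}\right)\lambda_1\lambda_2$ that you derived is exact.
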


\begin{proof}
See Appendix \ref{lemma1_prf}.
\end{proof}

We can now state the following proposition.

\begin{prop}\label{prop1}
The pairwise error probability for the tone-index ML detection is
\begin{align}
\mathbb{P}\{\mathbf{x}^{(i)}\rightarrow \mathbf{x}^{(j)}\} = \frac{1}{\pi} \sum_{j=0}^\infty& \frac{-2\sqrt{-1}}{j!(2j+1)} \left(\frac{\mu_2}{\mu_1}\right)^{j+\frac{1}{2}} \Gamma\left(j+\frac{3}{2}\right)\nonumber\\
&\times e^{-\frac{z_0}{2}} U\left(\frac{1}{2},-j,\frac{z_0}{2}\right),
\end{align}
where $z_0 = \max(0,\phi_{ij}/\mu_2)$ and $U(\cdot,\cdot,\cdot)$ is the confluent hypergeometric function of the first kind.
\end{prop}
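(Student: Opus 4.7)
The plan is to apply Lemma~\ref{lemma1} to reduce the PEP in \eqref{pep} to a tail probability of a weighted difference of two independent $\chi^2_1$ variables, and then to produce the $U$-function series by expanding the moment generating function in powers of $\mu_2/\mu_1$ and matching the resulting $z_0$-integrals with the integral representation of $U$.

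By Lemma~\ref{lemma1}, the symmetric matrix in \eqref{pep} has rank two with non-zero eigenvalues $\mu_1,\mu_2$ of opposite signs. Diagonalizing it and using the orthogonal invariance of the i.i.d.\ standard Gaussian vector $\zeta$, the quadratic form equals in distribution $\mu_1 Y_1+\mu_2 Y_2$ with $Y_1,Y_2\sim\chi^2_1$ independent. Assume without loss of generality $\mu_1>0>\mu_2$, so the PEP becomes $\mathbb{P}(\mu_1 Y_1-|\mu_2|Y_2>\phi_{ij})$. Conditioning on $Y_2$, the inner probability equals $1$ for $Y_2<\phi_{ij}/\mu_2$ (only relevant when $\phi_{ij}/\mu_2>0$) and $\mathrm{erfc}(\sqrt{(\phi_{ij}+|\mu_2|Y_2)/(2\mu_1)})$ otherwise; this accounts for the threshold $z_0=\max(0,\phi_{ij}/\mu_2)$. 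After shifting the integration variable by $z_0$, the factor $e^{-z_0/2}$ is pulled out of the $\chi^2_1$ density and the remaining $z_0$-dependence lives in the $\sqrt{y+z_0}$ denominator of the integrand.

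To generate the series in $(\mu_2/\mu_1)^{j+1/2}$, I would expand the chi-squared MGF factor $(1-2\mu_2 s)^{-1/2}$ binomially in $\mu_2/\mu_1$ (equivalently, expand $\mathrm{erfc}$ via its Craig representation) and interchange sum and integral. The resulting $y$-integrals of the form $\int_0^\infty y^{j+\alpha}(y+z_0)^{-\beta}e^{-y/2}\,dy$ match the integral representation $U(a,b,z)=\Gamma(a)^{-1}\int_0^\infty e^{-zt}t^{a-1}(1+t)^{b-a-1}\,dt$ with $(a,b,z)=(\tfrac12,-j,z_0/2)$, possibly after a Kummer transformation $U(a,b,z)=z^{1-b}U(a-b+1,2-b,z)$. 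The combinatorial prefactor $\Gamma(j+3/2)/[j!(2j+1)]$ then emerges from the binomial coefficients and the change-of-variable Jacobian, while the explicit $-2\sqrt{-1}$ compensates the principal-branch value of $(\mu_2/\mu_1)^{j+1/2}$ (needed because $\mu_2/\mu_1<0$) so that the total expression is real-valued.

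The main obstacle is the term-by-term matching: one must identify the $y$-integral with the correct instance of the $U$ function (including the Kummer step) and simultaneously justify the interchange of sum and integral, which holds absolutely for $|\mu_2/\mu_1|<1$ and extends by analytic continuation otherwise. A secondary subtlety is the branch choice of $(\mu_2/\mu_1)^{j+1/2}$: the imaginary contribution from the principal branch of each term must cancel against the explicit factor $-2\sqrt{-1}$ to ensure a real probability.
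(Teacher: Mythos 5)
Your skeleton matches the paper's proof: Lemma~\ref{lemma1} reduces \eqref{pep} to $\mathbb{P}\{\mu_1|\zeta_1|^2+\mu_2|\zeta_2|^2>\phi_{ij}\}$ with independent $\chi^2_1$ variables, one conditions on the variable paired with one eigenvalue so that the conditional probability is an error-function (equivalently incomplete-gamma) expression with threshold $z_0=\max(0,\phi_{ij}/\mu_2)$, and the term-wise integrals are recognized as confluent hypergeometric $U$ functions. Your endgame is in fact more explicit than the paper's: identifying $\int_0^\infty y^{j+1/2}(y+z_0)^{-1/2}e^{-y/2}dy$ with the representation $U(a,b,z)=\Gamma(a)^{-1}\int_0^\infty e^{-zt}t^{a-1}(1+t)^{b-a-1}dt$ followed by a Kummer transformation is exactly what the paper's citation of Gradshteyn--Ryzhik 3.383 encapsulates, and your remark that the factor $-2\sqrt{-1}$ cancels the principal-branch imaginary unit in $(\mu_2/\mu_1)^{j+\frac12}$ is correct.

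The gap is the mechanism you propose for generating the series. After conditioning there is no MGF variable $s$ left in the integral, and a binomial expansion of $(1-2\mu_2 s)^{-1/2}$ would in any case give integer powers of $\mu_2 s$, while the Craig representation of $\mathrm{erfc}$ is an integral over an angle, not a power series; neither produces the half-integer powers $(\mu_2/\mu_1)^{j+\frac12}$ with coefficients $\tfrac{(-1)^j}{j!(2j+1)}$. The step that actually works, and that the paper uses, is the Maclaurin series of the conditional probability itself: $\gamma\!\left(\tfrac12,x\right)=\sum_{j\ge0}\tfrac{(-1)^j x^{j+1/2}}{j!\,(j+1/2)}$ (G\&R 8.354) with $x=(\phi_{ij}-\mu_2 z)/(2\mu_1)$; writing $x^{j+1/2}=\bigl(\tfrac{-\mu_2}{2\mu_1}\bigr)^{j+1/2}\bigl(z-\phi_{ij}/\mu_2\bigr)^{j+1/2}$ is what releases the ratio $(\mu_2/\mu_1)^{j+1/2}$ and leaves precisely the integrals you matched to $U(\tfrac12,-j,\tfrac{z_0}{2})$. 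A secondary issue: your ``without loss of generality $\mu_1>0>\mu_2$'' is not innocuous, because the stated formula is not symmetric in $\mu_1,\mu_2$ (the lower limit $z_0=\max(0,\phi_{ij}/\mu_2)$ and the nonnegativity of $(\phi_{ij}-\mu_2 z)/(2\mu_1)$ on the integration range correspond to the opposite sign assignment); with your convention the $\mathrm{erfc}$ route yields $1$ minus a series rather than the bare series of the proposition, so the sign bookkeeping must be carried out in the convention the formula actually uses, or both cases treated explicitly.
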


\begin{proof}
See Appendix \ref{prop1_prf}.
\end{proof}

Therefore, the union bound on the total error probability for TIM can be written as
\begin{align}\label{ptim}
P^{\rm TIM}_e \leq \frac{1}{|\mathcal{S}|} \sum_{i\in \mathcal{S}} \sum_{j\in\mathcal{S},j\neq i} \mathbb{P}\{\mathbf{x}^{(i)}\rightarrow \mathbf{x}^{(j)}\},
\end{align}
where $|\mathcal{S}|$ is the cardinality of the set $\mathcal{S}$. 

\subsection{PAPR-based SWIPT}

The PAPR-based SWIPT scheme has been proposed in \cite{KIM}, but its performance has been evaluated only for the high SNR regime. For comparison reasons, we study its performance for all SNR values and we provide closed-form expressions for the achievable error probability. More specifically, the PAPR for a multisine signal with $N$ tones is
\begin{align}
\papr(N) = \frac{\max_k |r[k]|^2}{\mathbb{E}\{|r[k]|^2\}},
\end{align}
where the expectation is taken over a symbol time. The following proposition provides an approximation for the cumulative distribution function (CDF) of the $N$-tone PAPR. 
\begin{prop}\label{prop2}
The CDF of PAPR is approximated by
\begin{align}
F_{\rm PAPR}&(\theta,N) = \prod_k \Bigg(1-\frac{1}{\sqrt{2\pi\sigma_h^2}} \int_{-\infty}^\infty e^{-\frac{z^2}{2\sigma_h^2}}\nonumber\\
&\times Q_\frac{1}{2} \left(\frac{z x^{(N)}[k]}{\sigma_n},\theta^\frac{1}{2} \left(\frac{z^2\xi(N)}{\sigma^2_n} + 1\right)^\frac{1}{2}\right) dz\Bigg),
\end{align}
where $\xi(N) \triangleq \frac{P}{NK}\sum_{k=1}^K\frac{\sin^2[\pi N\Delta f^{(N)} k]}{\sin^2[\pi \Delta f^{(N)} k]}$ and $Q_\frac{1}{2}(\cdot,\cdot)$ is the Marcum $Q$-function of order $1/2$.
\end{prop}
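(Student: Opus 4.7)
My plan is to unpack the definition of PAPR, replace the denominator by its conditional-mean expression, then apply a standard product-of-marginals approximation across the $K$ samples before reducing each marginal to a Marcum-$Q$ expression.

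First I would interpret the denominator $\mathbb{E}\{|r[k]|^2\}$ as the time-average of $|r[k]|^2$ taken (conditional on $h$) also over the noise. A direct computation gives $\frac{1}{K}\sum_{k=1}^{K}\mathbb{E}\{|r[k]|^2\mid h\}=h^2\xi(N)+\sigma_n^2$, with $\xi(N)$ as defined, since $(x^{(N)}[k])^2=\tfrac{P}{N}\sin^2[\pi N\Delta f^{(N)}k]/\sin^2[\pi\Delta f^{(N)}k]$. Hence
\begin{align*}
F_{\rm PAPR}(\theta,N)=\mathbb{P}\!\left\{\max_{1\le k\le K}|r[k]|^2\le \theta\bigl(h^2\xi(N)+\sigma_n^2\bigr)\right\}.
\end{align*}

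Second, I would rewrite $\{\max_k|r[k]|^2\le M\}=\bigcap_{k}\{|r[k]|^2\le M\}$ and then invoke the approximation that the events $\{|r[k]|^2\le \theta(h^2\xi(N)+\sigma_n^2)\}$ are treated as mutually independent across $k$. This is the only non-rigorous step, and it will be the principal obstacle: the $r[k]$'s share the random $h$, so strictly the events are correlated; the approximation effectively assigns each $k$ its own independent copy of $h$. Under this approximation,
\begin{align*}
F_{\rm PAPR}(\theta,N)\approx\prod_{k=1}^{K}\mathbb{P}\!\left\{|r[k]|^2\le \theta\bigl(h^2\xi(N)+\sigma_n^2\bigr)\right\}.
\end{align*}

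Third, for each fixed $k$ I would compute the marginal probability by conditioning on $h=z$. Given $h=z$, $r[k]\sim\mathcal{N}(zx^{(N)}[k],\sigma_n^2)$, so $r[k]/\sigma_n\sim\mathcal{N}(zx^{(N)}[k]/\sigma_n,1)$ and the identity $Q_{1/2}(a,b)=\mathbb{P}\{|X|>b\}$ for $X\sim\mathcal{N}(a,1)$ yields
\begin{align*}
\mathbb{P}\{|r[k]|^2>\theta(z^2\xi(N)+\sigma_n^2)\mid h=z\}=Q_{1/2}\!\left(\tfrac{zx^{(N)}[k]}{\sigma_n},\,\theta^{1/2}\bigl(\tfrac{z^2\xi(N)}{\sigma_n^2}+1\bigr)^{1/2}\right),
\end{align*}
after matching $\sqrt{M}/\sigma_n=\theta^{1/2}(z^2\xi(N)/\sigma_n^2+1)^{1/2}$. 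Averaging $1-Q_{1/2}(\cdot,\cdot)$ against the Gaussian PDF of $h\sim\mathcal{N}(0,\sigma_h^2)$ produces the bracketed factor in the proposition, and taking the product over $k=1,\dots,K$ closes the argument.

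The derivation is otherwise routine; the only real subtlety is the independence approximation in the second step, which is the reason the statement asserts an approximation rather than an equality. A secondary, purely notational, point is to justify that the "expectation" in the PAPR definition is read as the conditional-on-$h$ time average $h^2\xi(N)+\sigma_n^2$, which is what makes the argument of $Q_{1/2}$ match the stated form with $z^2$ rather than $\sigma_h^2$ in the second entry.
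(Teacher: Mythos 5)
Your proposal is correct and follows essentially the same route as the paper: identify the denominator as $h^2\xi(N)+\sigma_n^2$, factor the max over the $K$ samples into a product, express each per-sample tail probability via the noncentral chi-square/Marcum-$Q_{1/2}$ identity, and average over $h\sim\mathcal{N}(0,\sigma_h^2)$. The only difference is presentational: the paper conditions on $h$ first (where the product over $k$ is exact) and then unconditions inside each factor, whereas you apply the product-of-marginals step unconditionally and correctly flag it as the sole approximation --- in fact you locate the source of the approximation (the shared $h$ across samples) more explicitly than the paper does.
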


\begin{proof}
See Appendix \ref{prop2_prf}.
\end{proof}

\begin{figure}[t]\centering
  \includegraphics[width=0.8\linewidth]{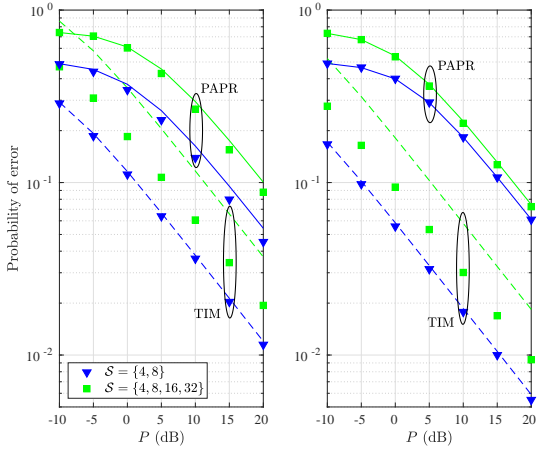}\vspace{-2mm}
  \caption{Probability of error for TIM and PAPR modulation; lines and markers depict theoretical and simulation results, respectively.}\label{fig3}\vspace{-1.5mm}
\end{figure}

Using the above CDF, we can deduce the error probability for the PAPR-based SWIPT. Specifically, the demodulator decides $\hat{N}$ based on the minimum one-dimensional Euclidean distance, i.e., $\hat{N}=\arg\min_{s_i\in \mathcal{S}} |\papr(N)-s_i|$, where $s_i$ is the $i$-th element of $\mathcal{S}$. Let $d_i = \frac{1}{2}(s_i+s_{i+1})$ be the midpoint between $s_i$ and $s_{i+1}$. The midpoints serve as the decision boundaries of the demodulator. Hence,
\begin{align}\label{ppapr}
P^{\rm PAPR}_e = \frac{1}{|\mathcal{S}|} \sum_{i\in \mathcal{S}} p_i,
\end{align}
where $p_1 = 1 - F_{\rm PAPR}(d_1)$, $p_{|\mathcal{S}|} = F_{\rm PAPR}(d_{|\mathcal{S}|-1})$ and $p_i = 1 - F_{\rm PAPR}(d_i) + F_{\rm PAPR}(d_{i-1})$, $1 < i < |\mathcal{S}|$.

\subsection{Information-energy tradeoff}\label{ieto}

The proposed SWIPT scheme is also associated with a fundamental information-energy tradeoff. Assume a symbol time $T \geq 1/\Delta f^*$, where $\Delta f^* = (N^*-1)/W$ is upper bounded by the minimum frequency spacing between tones, i.e., $N^* \geq \max \mathcal{S}$. Then, the information rate and the energy harvested are equal to
\begin{equation}
R=\frac{\log_2(|\mathcal{S}|)}{T}~ {\rm bits/sec},
\end{equation}
\begin{equation}\label{energy}
Q=\frac{1}{|\mathcal{S}|}\sum_{N \in \mathcal{S}}Q_N\approx a_2\sigma_h^2 P+\frac{3 a_4\sigma_h^4P^2}{|\mathcal{S}|}\sum_{N \in \mathcal{S}}N,
\end{equation}
respectively, where the approximation in \eqref{energy} holds for large $N$. The achieved information-energy trade-off is similar to the PAPR-based scheme; however, the proposed TIM scheme ensures a larger operation SNR regime and a lower average error probability.

\begin{figure}[t]\centering
  \includegraphics[width=0.8\linewidth]{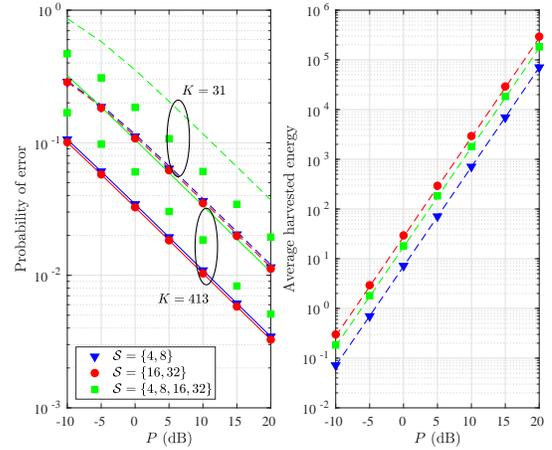}\vspace{-2mm}
  \caption{Probability of error and average harvested energy using TIM; lines and markers depict theoretical and simulation results, respectively.}\label{fig4}\vspace{-1.5mm}
\end{figure}

\section{Numerical results \& conclusions}\label{concl}
The following parameters were used for the simulations: $\sigma_h^2 = 1$, $\sigma_n^2 = 1$, $W = 1$ kHz, $a_2=0.0034$ and $a_4=0.3829$.

Fig. \ref{fig3} illustrates the probability of error for both TIM and PAPR modulation in terms of the transmit power, with $\mathcal{S} = \{4,8\}$ and $\mathcal{S} = \{4,8,16,32\}$. The symbol time is $T = 1/\Delta f^* = (N^*-1)/W$, with $N^* = 32$, $K = 31$ samples/symbol and $N^* = 128$, $K = 127$ samples/symbol, for the left and right sub-figure, respectively. It is clear that for both symbol times, our proposed TIM modulation outperforms the PAPR approach; higher gains are achieved with a larger symbol time. Moreover, a smaller set of multisine waveforms provides a better probability of error, as expected. Finally, the simulation results (markers) are in line with our theoretical analysis (lines). For TIM, expression \eqref{ptim} provides an upper bound when $|\mathcal{S}| > 2$, whereas it matches the simulation when $|\mathcal{S}| = 2$. For PAPR, expression \eqref{ppapr} provides a good approximation and converges to the simulation results as $T$ increases.

Fig. \ref{fig4} depicts the error probability and the average harvested energy attained by the TIM scheme, with $N^* = 32$ and for $K=31$ samples/symbol (dashed lines) and $K=413$ samples/symbol (solid lines). The main observation is that the set of multisine waveforms $\{16,32\}$ achieves the best performance with respect to both the error probability and the average harvested energy. On the other hand, the set $\{4,8,16,32\}$ achieves the worst error probability performance and comes second best in terms of average harvested energy. As expected, a larger $K$ provides a lower probability of error but has no effect on the average harvested energy. However, as discussed in Section \ref{ieto}, there exists an information-energy tradeoff. Specifically, $\{16,32\}$ provides an information rate equal to $1/T$ whereas $\{4,8,16,32\}$ an information rate equal to $2/T$. Therefore, these insights provide design guidelines for SWIPT systems employing TIM, according to the rate/energy requirements. The practical application of the TIM scheme for flat-fading channels requires a careful system design (e.g. bandwidth, number of tones, inter-tone spacing etc). An investigation of the system parameters on the achieved performance as well as the application of the TIM scheme to frequency selective channels, are interesting future research directions.

\appendices
\section{Proof of Lemma \ref{lemma1}}\label{lemma1_prf}
Since the matrix $\mathbf{R}_i=\sigma_n^2\mathbf{I}+\sigma_h^2 \mathbf{x}^{(i)}(\mathbf{x}^{(i)})^T$ is the sum of a diagonal matrix with a rank-$1$ matrix, by employing the Sherman-Morrison formula \cite{HOR}, we have $\mathbf{R}_i^{-1}=(1/\sigma_n^2)\mathbf{I}-\mathbf{Q}_i$, where $\mathbf{Q}_i=(\sigma_h^2/\sigma_n^4)\mathbf{x}^{(i)}(\mathbf{x}^{(i)})^T/[1+(\sigma_h^2/\sigma_n^2)(\mathbf{x}^{(i)})^T(\mathbf{x}^{(i)})]$ is a rank-$1$ matrix. The matrix
$\mathbf{W}_{ij}=\mathbf{R}_i^{-1}-\mathbf{R}_j^{-1}=\mathbf{Q}_j-\mathbf{Q}_i$ is the sum of two rank-$1$ matrices, and since $\mathbf{x}^{(i)}$ and $\mathbf{x}^{(j)}$ are linearly independent, it is a rank-$2$ matrix. By calculating the trace of the matrix $\mathbf{W}_{i,j}$, we can find an expression for its two non-zero eigenvalues $\lambda_1$, $\lambda_2$. Specifically, we have
\begin{align}
&\Tr({\mathbf{W}}_{ij})=\Tr({\mathbf{Q}_j})-\Tr({\mathbf{Q}_i}) \nonumber \\
&=\frac{\sigma_h^2}{\sigma_n^4}\frac{\Tr({\mathbf{x}^{(j)}(\mathbf{x}^{(j)})^T})}{1+\frac{\sigma_h^2}{\sigma_n^2}\Tr({\mathbf{x}^{(j)}(\mathbf{x}^{(j)})^T})}-\frac{\sigma_h^2}{\sigma_n^4}\frac{\Tr({\mathbf{x}^{(i)}(\mathbf{x}^{(i)})^T})}{1+\frac{\sigma_h^2}{\sigma_n^2}\Tr({\mathbf{x}^{(i)}(\mathbf{x}^{(i)})^T})} \nonumber\\
&\approx 0 \Rightarrow \lambda_1+\lambda_2=0,
\end{align} 
with\footnote{We have that $\frac{P}{i} \int_{0}^{T}\frac{\sin^2(i \pi \Delta f^{(i)} t)}{\sin^2(\pi \Delta f^{(i)} t)}dt\geq \frac{P}{i} \int_{0}^{\frac{k}{\Delta f^{(i)}}}\frac{\sin^2(i \pi \Delta f^{(i)} t)}{\sin^2(\pi \Delta f^{(i)} t)}dt=\frac{P}{i}\int_{0}^{k\pi} \frac{\sin^2 (ix)}{\sin^2 (x)} dx=\frac{Pk}{i}\int_{0}^{\pi}\frac{\sin^2 (ix)}{\sin^2 (x)} dx$ $=Pk \pi\gg 1$ \cite{GRAD} for moderate $k$ and $P$, where $T=\frac{k}{\Delta f^{(i)}}+\tau$ with $\tau \leq 1/\Delta f^{(i)}$ and $k$ positive integer.} $\Tr({\mathbf{x}^{(i)}(\mathbf{x}^{(i)})^T})\approx \frac{P}{i} \int_{0}^{T}\frac{\sin^2(i \pi \Delta f^{(i)} t)}{\sin^2(\pi \Delta f^{(i)} t)}dt\gg1\; \forall\; i$, which shows that the two non-zero eigenvalues have different signs and are equal in absolute value. Given that the matrices $\mathbf{R}_i$ and $\mathbf{W}_{ij}$ are symmetric, the rank of the matrix $\mathbf{R}_i^{\frac{T}{2}}\mathbf{W}_{ij}\mathbf{R}_i^{\frac{1}{2}}$ is deduced by employing the Sylvester and rank-sum inequalities \cite{HOR},
\begin{align}
&\Rk(\mathbf{W}_{ij})+\Rk(\mathbf{R}_i) - K = 2 \leq \Rk(\mathbf{R}_i^{\frac{T}{2}}\mathbf{W}_{ij}\mathbf{R}_i^{\frac{1}{2}}) \nonumber \\
&=\Rk(\mathbf{W}_{ij}\mathbf{R}_i) \leq \min(\Rk(\mathbf{W}_{ij}),\Rk(\mathbf{R}_i)) = 2\nonumber \\
& \Rightarrow \Rk(\mathbf{R}_i^{\frac{T}{2}}\mathbf{W}_{ij}\mathbf{R}_i^{\frac{1}{2}})=2,
\end{align}
where we note that the matrices $\mathbf{W}_{ij}\mathbf{R}_i$ and $\mathbf{R}_i^{\frac{T}{2}}\mathbf{W}_{ij}\mathbf{R}_i^{\frac{1}{2}}$ have the same eigenvalues. Since $\mathbf{W}_{ij}$ and $\mathbf{R}_i$ are diagonalizable (symmetric), there exists $\mathbf{P}$ such that 
$\mathbf{W}_{ij}=\mathbf{P}\mathbf{D}_w\mathbf{P}^{-1}$ and $\mathbf{R}_i=\mathbf{P}\mathbf{D}_R\mathbf{P}^{-1}$ and hence $\mathbf{W}_{ij}\mathbf{R}_i=\mathbf{P}\mathbf{D}_w\mathbf{D}_R\mathbf{P}^{-1}$; this means that the eigenvalues of $\mathbf{W}_{ij}\mathbf{R}_i$ are given by the product of the individual eigenvalues of the matrices $\mathbf{W}_{ij}$ and $\mathbf{R}_{i}$. The matrix $\mathbf{R}_i$ has $K$ eigenvalues with $\nu_k=\sigma_n^2$ with $k=1,\ldots,K-1$ (multiplicity $K-1$) and $\nu_K=\sigma_n^2+\sigma_h^2 (\mathbf{x}^{(i)})^T(\mathbf{x}^{(i)})$. Therefore, the eigenvalues of the matrix $\mathbf{R}_i^{\frac{T}{2}}\mathbf{W}_{ij}\mathbf{R}_i^{\frac{1}{2}}$ are $\mu_1=\nu_1 \lambda_1=\sigma_n^2 \lambda_1$ and $\mu_2=\nu_K\lambda_2=-(\sigma_n^2+\sigma_h^2 (\mathbf{x}^{(i)})^T(\mathbf{x}^{(i)}))\lambda_1$, which completes the proof. 

\vspace{-0.1cm}
\section{Proof of Proposition \ref{prop1}}\label{prop1_prf}
Following Lemma \ref{lemma1}, the matrix $\mathbf{R}_i^{\frac{T}{2}} \left(\mathbf{R}_i^{-1}-\mathbf{R}_j^{-1}\right)\mathbf{R}_i^{\frac{1}{2}}$ has two non-zero eigenvalues $\mu_1$ and $\mu_2$. Then, from \eqref{pep}, we can write \cite{SKO}
\begin{align}
\mathbb{P}\left\{\sum_{k=1}^K\mu_k |\zeta_k|^2 >\phi_{ij} \right\} = \mathbb{P}\big\{\mu_1 |\zeta_1|^2 \!+\! \mu_2 |\zeta_2|^2 > \phi_{ij} \big\},
\end{align}
where $|\zeta_1|^2$ and $|\zeta_2|^2$ are Gamma random variables with shape parameter $1/2$ and scale parameter $2$, since $\zeta_1, \zeta_2 \sim \mathcal{N}(0,1)$. By Lemma \ref{lemma1}, $\mu_1$ and $\mu_2$ have opposite signs, so solving for $|\zeta_1|^2$ and using both CDF and PDF of the Gamma distribution,
\begin{align}
&\mathbb{P}\{\mathbf{x}^{(i)}\rightarrow \mathbf{x}^{(j)}\} = \frac{1}{\pi\sqrt{2}}\int_{z_0}^\infty \gamma\left(\frac{1}{2}, \frac{\phi_{ij}-\mu_2 z}{2\mu_1}\right) \frac{e^{-\frac{z}{2}}}{\sqrt{z}} dz\nonumber\\
&=\frac{\sqrt{2}}{\pi}\sum_{j=0}^\infty \frac{(-1)^j}{j!(2j+1)} \int_{z_0}^\infty \left(\frac{\phi_{ij}-\mu_2 z}{2\mu_1}\right)^{j+\frac{1}{2}} \frac{e^{-\frac{z}{2}}}{\sqrt{z}} dz,
\end{align}
which follows from the power series representation of the lower incomplete gamma function \cite[8.354]{GRAD} and where $z_0 = \max(0,\phi_{ij}/\mu_2)$. The final result follows with the help of \cite[3.383]{GRAD}.

\vspace{-0.25cm}
\section{Proof of Proposition \ref{prop2}}\label{prop2_prf}
By conditioning on $h$, the average power is given by
\begin{align}
\mathbb{E}\left\{\left|h x^{(N)}[k]+n[k]\right|^2 \Big| h\right\} = h^2 \xi(N) + \sigma^2_n,
\end{align}
where $\xi(N) \triangleq \mathbb{E}\{(x^{(N)}[k])^2\} = \frac{P}{NK}\sum_{k=1}^K\frac{\sin^2[\pi N\Delta f^{(N)} k]}{\sin^2[\pi \Delta f^{(N)} k]}$. Then, the CDF $F_{\rm PAPR}(\theta,N)$ of PAPR is written as
\begin{align}
\mathbb{P}\{\papr(N) < \theta | h\} &= \mathbb{P}\left\{\frac{\max_k |r[k]|^2}{h^2\xi(N) + \sigma^2_n} < \theta \Big| h\right\}.
\end{align}
Since the variables $r[k]$ are mutually independent, the CDF of the largest order statistic is given by
\begin{align}
\prod_k \mathbb{P}\left\{\left|r[k]\right|^2 < \theta\left(h^2\xi(N) + \sigma^2_n\right) \Big| h\right\}.
\end{align}
As $r[k]=h x^{(N)}[k]+n[k]$ and $n\sim \mathcal{N}(0,\sigma^2_n)$, then $\sigma_n r[k] \sim \mathcal{N}(hx^{(N)}[k]/\sigma_n,1)$. Therefore, $|r[k]|^2$ is a non-central chi-squared random variable with one degree of freedom and non-centrality parameter $(hx^{(N)}[k]/\sigma_n)^2$. Then, we can write
\begin{align}
&F_{\rm PAPR}(\theta,N) = \prod_k \mathbb{P}\left\{\left|r[k]\right|^2 < \frac{\theta}{\sigma^2_n}\left(h^2\xi(N) + \sigma^2_n\right) \Big| h\right\}\nonumber\\
&=\prod_k\!\Bigg(\!1\!-\!\mathbb{E}_h\!\left\{\!Q_\frac{1}{2}\!\left(\frac{hx^{(N)}[k]}{\sigma_n},\theta^\frac{1}{2}\!\left(\frac{h^2\xi(N)}{\sigma^2_n} + 1\right)^{\!\frac{1}{2}}\right)\!\right\}\!\!\Bigg),
\end{align}
which follows from the CDF of a non-central random chi-squared variable with one degree of freedom and $Q_\frac{1}{2}(\cdot,\cdot)$ is the Marcum $Q$-function of order $1/2$. By unconditioning on $h$ using the Gaussian distribution PDF completes the proof.


\begin{thebibliography}{5}
\bibitem{KRI} I. Krikidis, S. Timotheou, S. Nikolaou, G. Zheng, D. W. Kwan, and R. Schober, ``Simultaneous wireless information and power transfer in modern communication systems,'' {\it IEEE Commun. Mag.}, vol. 52, pp. 104--110, Nov. 2014.

\bibitem{CLE} B. Clerckx, R. Zhang, R. Schober, D. W. K. Ng, D. I. Kim, and H. V. Poor, ``Fundamentals of wireless information and power transfer: From RF energy harvester models to signal and system designs,'' {\it IEEE J. Selec. Areas Commun.}, vol. 37, pp. 4--33, Jan. 2019.

\bibitem{CLA1}  S. Claessens, N. Pan, M. Rajabi, D. Schreurs, and S. Pollin,``Enhanced biased  ASK modulation  performance  for SWIPT with  AWGN  channel  and dual-purpose  hardware,'' {\it IEEE  Trans.  Microw.  Theory  Techn.},  vol.  66,no. 7, pp. 3478--3486, July 2018.

\bibitem{CLA2} S. Claessens, N. Pan, D. Schreurs and S. Pollin, ``Multitone FSK modulation for SWIPT,'' {\it IEEE Trans. Microw. Th. Techn.}, vol. 67,  pp. 1665--1674, May 2019.

\bibitem{CLER} E. Bayguzina and B. Clerckx, ``Modulation design for wireless information and power transfer with nonlinear energy harvester modeling'', in {\it Proc. EEE Int. Work. Signal Process Adv. Wireless Commun.}, Kalamata, Greece, June 2018. 

\bibitem{GAO} Z. Hu, C. Yuan, and F. Gao, ``Maximizing harvested energy for full-duplex SWIPT system with power splitting,'' {\it IEEE Access}, vol. 5, pp. 24975--24987, Oct. 2017.

\bibitem{RUI1} X. Zhou, R. Zhang, and C. K. Ho, ``Wireless information and power transfer: Architecture design and rate-energy tradeoff,'' {\it IEEE Trans. Commun.}, vol. 61, pp. 4754--4767, Nov. 2013.

\bibitem{KIM} D. I. Kim, J. H. Moon, and J. J. Park, ``New SWIPT using PAPR: How it works,'' {\it IEEE Wireless Commun. Lett.}, vol. 5, pp. 672--675, Dec. 2016.

\bibitem{PAN} N. Pan, D. Belo, M. Rajabi, D. Schreurs, N. B. Carvalho, and S. Pollin, ``Bandwidth analysis of RF-DC converters under multisine excitation,'' {\it IEEE Trans. Microw. Theory Techn.}, vol. 66, pp. 791--802, Feb. 2018.

\bibitem{CLE2} B. Clerckx and E. Bayguzina, ``Waveform design for wireless power transfer,'' {\it IEEE Trans. Signal Process.}, vol. 64, pp. 6313--6328, Dec. 2016. 

\bibitem{SCO} M. D. Scott, B. E. Boser, and K. S. J. Pister, ``An ultralow-energy ADC for smart dust,'' {\it IEEE J. Solid-State Circuits}, vol. 38, pp. 1123--1129, July 2003. 

\bibitem{SKO} M. Skoglund, J. Giese, and S. Parkvall, ``Code design for combined channel estimation and error protection,'' \emph{IEEE Trans. Inf. Theory}, vol. 48, pp. 1162--1171, May 2002.

\bibitem{HOR} R. A. Horn and C. R. Johnson, \emph{Matrix Analysis}, Cambridge, 2nd Ed., 2013.

\bibitem{GRAD} I. S. Gradshteyn and I. M. Ryzhik, \emph{Table of Integrals, Series, and Products}, Elsevier, 2007.
\end{thebibliography}
\end{document}